\documentclass[journal]{IEEEtran}
\usepackage{amsmath,amssymb,amsfonts,amsthm,mathtools}
\usepackage{array}
\usepackage{graphicx}
\usepackage{cite}
\usepackage{booktabs}
\usepackage{xcolor}
\usepackage{dsfont}

\newtheorem{theorem}{Theorem}

\newtheorem{proposition}[theorem]{Proposition}
\newtheorem{corollary}[theorem]{Corollary}
\newtheorem{definition}[theorem]{Definition}

\begin{document}

\title{Macro--Micro Decision-Making in 6G Networks: An Agent-Based Framework for the Resource-Fungibility Landscape}

\author{Sayanti~Ghosh,~\IEEEmembership{Member,~IEEE}, Indrakshi~Dey,~\IEEEmembership{Senior~Member,~IEEE}, and Nicola~Marchetti,~\IEEEmembership{Senior~Member,~IEEE}%
\thanks{S.~Ghosh and N.~Marchetti are with the Department of Electrical and Electronic Engineering, Trinity College Dublin, Ireland (saghosh@tcd.ie; nicola.marchetti@tcd.ie). I.~Dey is with the Walton Institute, South East Technological University, Ireland (indrakshi.dey@waltoninstitute.ie). Supported in part by EU MSCA Project COALESCE under Grant 101130739, US--Ireland R\&D Partnership RI-SFI-23/US/3924, and Research Ireland Grant 13/RC/2077\_P2.}
}

\maketitle

\begin{abstract}
A defining feature of 6G networks is that performance depends not only on the quantity of available resources (e.g., spectrum, antennas, cache memory, compute, and fronthaul bandwidth) but also on their \emph{fungibility}, i.e., the ability of one resource to substitute for another under changing conditions. We argue that the fungibility landscape of a distributed 6G system is governed by two coupled decision scales: \emph{micro} decisions made locally by agents and \emph{macro} outcomes that emerge at the network level. Existing distributed-optimization approaches largely conflate these scales. To address this gap, we develop an agent-based-modeling (ABM) framework that separates macro and micro decisions through three operator-controllable macro choices, three micro hyperparameters, and three structural metrics. We establish six key results: (i) a two-timescale decomposition theorem, (ii) a structural-metric basis theorem, (iii) a macro--micro design rule with closed-form factorization of the emergent breakdown threshold, (iv) a fungibility--resilience monotonicity proposition, (v) a connectivity--substitutability duality theorem, and (vi) a multi-application generalization proposition. Numerical results visualize the macro fungibility landscape and the micro decision-sensitivity region for a representative 6G deployment.
\end{abstract}

\begin{IEEEkeywords}
Agent-based modeling, 6G networks, resource fungibility, macro--micro decision making, structural metrics, Byzantine resilience.
\end{IEEEkeywords}

\section{Introduction}
\label{sec:introduction}

The evolution toward sixth-generation (6G) wireless networks is expected to support highly heterogeneous services including integrated sensing and communication (ISAC), digital twins, autonomous systems, and massive machine-type communications~\cite{IMT2030,ETSIMAT}. Unlike previous generations, 6G systems are envisioned as highly distributed architectures in which communication, computation, caching, sensing, fronthaul, and edge intelligence operate jointly across cloud--edge infrastructures~\cite{Tataria2021,Talwar2021}. Consequently, network performance depends not only on resource availability but also on the ability of heterogeneous resources to substitute for one another under dynamic conditions, a property referred to as \emph{resource fungibility}~\cite{ghosh2026structural}. Recent standardization efforts, including ETSI ISG Multiple Access Techniques (MAT) and Third Generation Partnership Project (3GPP) artificial intelligence/machine learning (AI/ML)-native networking studies, emphasize distributed intelligence, resilience, and adaptive orchestration for future wireless systems~\cite{ETSIMAT,3GPPAI}. Meanwhile, existing distributed optimization approaches based on the alternating direction method of multipliers (ADMM), multi-agent reinforcement learning (MARL), and Byzantine-resilient consensus primarily focus on convergence and resource optimization without explicitly linking local agent decisions to emergent system-level resilience~\cite{Reifert2022,Wu2024MARL,Ivoghlian2022Adaptive,su2016fault}. Furthermore, while recent work has highlighted the importance of resource fungibility and structural metrics for characterizing resilience in distributed 6G systems~\cite{ghosh2026structural}, the relationship between operator-level deployment choices, agent-level adaptation dynamics, and emergent resilience remains largely unexplored.

Motivated by this gap, we develop an agent-based modeling (ABM) framework for macro--micro decision-making in distributed 6G systems. The framework separates operator-controlled deployment choices from local agent interactions and introduces structural metrics that connect resource fungibility, connectivity, substitutability, and resilience. The main contributions of this paper are summarized as follows:
\begin{itemize}
\item We introduce a resource-fungibility perspective for distributed 6G systems and develop a macro--micro ABM framework that separates operator-level decisions from local agent interactions.

\item We identify three macro choices, three micro hyperparameters, and three structural metric classes, and derive a macro--micro design rule for characterizing the emergent Byzantine resilience threshold.

\item We establish fundamental relationships among fungibility, connectivity, substitutability, and resilience, including a connectivity--substitutability duality with a closed-form exchange relationship.

\item We show the applicability of the framework to representative 6G scenarios, including ISAC, federated learning, and vehicle-to-everything (V2X) coordination, through numerical validation against distributed baselines.
\end{itemize}
\vspace{-3.5mm}
\section{The Resource-Fungibility Landscape}
\label{sec:fungibility}
\begin{figure}[t]
\centering
\includegraphics[width=0.8\columnwidth]{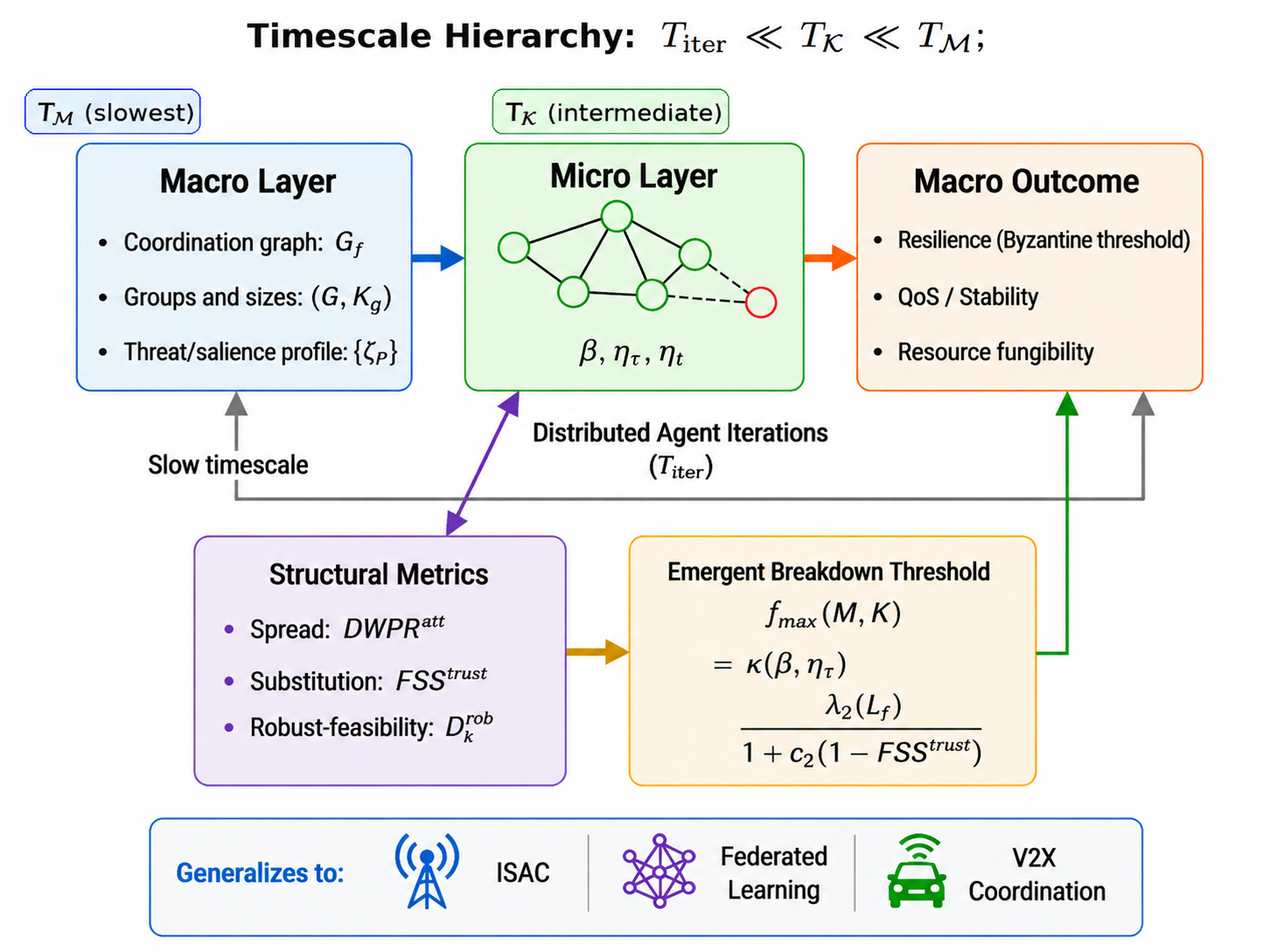}
\caption{System model of the proposed macro--micro agent-based framework for distributed 6G resource-fungibility management.}
\label{fig:system_model}
\end{figure}
We begin by making precise what is meant by fungibility in a 6G context. Consider a distributed wireless system serving a fixed quality-of-service (QoS) requirement, and let $\mathcal{R}=\{\mathrm{spectrum},\mathrm{antennas},\mathrm{cache},\mathrm{compute},\mathrm{fronthaul},\ldots\}$ denote the resource axes available to the operator. A resource configuration $\mathbf{r}\in\mathbb{R}_+^{|\mathcal{R}|}$ specifies the per-axis quantities provisioned, and an operating point $\mathbf{r}^{\star}(\mathrm{QoS},\mathbf{c})$ denotes the cheapest configuration that meets the QoS target under operating conditions $\mathbf{c}$ (channel realisation, traffic load, threat profile).

\begin{definition}[Resource fungibility]\label{def:fungibility}
Let $\mathbf{e}_i$ and $\mathbf{e}_j$ denote the $i$th and $j$th canonical basis vectors of $\mathbb{R}^{|\mathcal{R}|}$, respectively. Two resources $r_i,r_j\in\mathcal{R}$ are \emph{$(\delta,\epsilon)$-fungible} at configuration $\mathbf{r}_0$ if for every increment $\Delta r_i\in[-\delta,\delta]$ there exists $\Delta r_j$ with $|\Delta r_j|<\infty$ such that
$\mathbf{r}_0+\Delta r_i\mathbf{e}_i-\Delta r_j\mathbf{e}_j$ still meets the QoS target within tolerance $\epsilon$ under the same operating conditions $\mathbf{c}$.
\end{definition}

Resource fungibility is a local property of $\mathbf{r}_0$: at some configurations the system is robust to resource changes (high fungibility), at others a small perturbation in one resource forces a large compensating change in another (low fungibility), and at still others no compensation is possible (zero fungibility). To quantify this property, we define the pairwise fungibility coefficient between resources $r_i$ and $r_j$ at $\mathbf{r}_0$ as
$F_{ij}(\mathbf{r}_0)
=
\left|
\frac{\Delta r_i}{\Delta r_j}
\right|$, where $\Delta r_j$ is the minimum compensating resource adjustment required to maintain the target QoS under the same operating conditions. Larger values of $F_{ij}$ indicate greater substitutability between resources $r_i$ and $r_j$, whereas $F_{ij}=0$ indicates no feasible substitution. The \emph{fungibility landscape} of the system is the map
$
\mathbf{r}_0
\mapsto
\{F_{ij}(\mathbf{r}_0)\}_{i<j}$,
evaluated jointly across all $\binom{|\mathcal{R}|}{2}$ resource pairs.
The key observation is that resource fungibility is not measured directly by the operator but through aggregate \emph{structural metrics} that summarize pairwise resource substitutions into operationally meaningful indicators. These metrics compactly characterize the fungibility landscape and motivate the proposed macro--micro decision-making framework. The next section formalizes this relationship through the proposed macro--micro decision-making framework.

\section{Macro--Micro Decision Framework}
\label{sec:framework}

The decision-making structure of a distributed 6G system organises into three concurrent layers operating on distinct timescales as shown in Fig. \ref{fig:system_model}. At the top sits an \emph{operator} who sets controllable choices on a slow timescale ($T_{\mathcal{M}}$ for macro choices, $T_{\mathcal{K}}$ for micro hyperparameters); in the middle, a population of \emph{agents} executes parameterised local update rules on a fast iteration timescale ($T_{\mathrm{iter}}\ll T_{\mathcal{K}}\ll T_{\mathcal{M}}$); at the bottom, \emph{network outcomes} emerge from many iterations of agent activity and are what the operator observes and ultimately cares about. The operator closes a feedback loop by re-tuning her controllable choices in response to the observed emergent outcomes, but she never observes the micro state of any individual agent.
 \vspace{-4mm}
\subsection{Three Operator-Controllable Macro Choices}

At deployment the operator chooses a triple
$\mathcal{M}=(\mathcal{G}_f,(G,K_g),\{\zeta_{\mathcal{P}}\})$.
The first component is the inter-agent coordination graph
$\mathcal{G}_f$, with Laplacian $\mathbf{L}_f$ and algebraic connectivity
$\lambda_2(\mathbf{L}_f)$, which determines the rate of information diffusion.
The second component is the agent grouping $(G,K_g)$, where
$G=\{G_1,\ldots,G_{K_g}\}$ partitions the agents into $K_g$ groups, determining how the resource-allocation task is decomposed.
The third component is the salience profile
$\{\zeta_{\mathcal{P}}\}$ over vulnerability-mode partitions
$\Pi$, assigning relative importance to threat classes, such
as Byzantine agents, link failures, jamming attacks, and
resource-exhaustion events, against which the system is
optimized.
These macro choices are set at deployment and remain fixed on the iteration timescale.

 \vspace{-4mm}
\subsection{Three Operator-Controllable Micro Knobs}
The agents' local update rules are parameterised by a second triple $\mathcal{K}=(\beta,\eta_\tau,\eta_t)$. The trim parameter $\beta\in(0,1/2)$ governs a coordinate-wise trimmed-mean aggregation that filters extreme (likely Byzantine) reports from each agent's neighbour set. The trust learning rate $\eta_\tau\in(0,1)$ governs an exponential-smoothing trust dynamics that decays the reputation of persistently inconsistent agents. The primal-dual step size $\eta_t=\frac{\eta_0}{t+1}$, where $\eta_0>0$ is the initial step-size parameter and $t$ denotes the iteration index, governs the convergence rate of the local resource-allocation update. Unlike the macro choices, the micro knobs can be retuned as the threat profile evolves, operating on an intermediate timescale between deployment-time macro choices and per-iteration agent decisions.
 \vspace{-4mm}
\subsection{Three Structural Metrics as the Macro--Micro Interface}
The decisive feature of the framework is that the macro outcomes are not directly computable from any single agent's view, but \emph{can} be tracked through three classes of structural metrics that are simultaneously (i) locally computable from agent-side micro state and (ii) predictive of the macro outcome. The first class, \emph{spread metrics}, measures how broadly an agent's serviced resources are distributed across the vulnerability-mode partitions, such as Byzantine attacks,
link failures, jamming/interference events, and resource-exhaustion conditions. We use the Gini--Simpson form $\mathrm{DWPR}^{\mathrm{att}}_k(\mathcal{P})=1-|\boldsymbol{\alpha}^{\mathcal{P}}_k|^2$, where \emph{DWPR} denotes the \emph{Degeneracy-Weighted Path Robustness} metric and $\boldsymbol{\alpha}^{\mathcal{P}}_k$ is the rate-weighted mode-mass distribution over partition $\mathcal{P}$. A high spread means the agent's resources are well-decorrelated across vulnerability modes; a low spread means they are concentrated. The second class, \emph{substitution metrics}, count pairs of mutually substitutable entities, weighted by reputations that evolve through peer consistency votes; we use $\mathrm{FSS}^{\mathrm{trust}}_k$, where \emph{FSS} denotes the \emph{Functional Substitution Score}, as a representative. The third class, \emph{robust-feasibility metrics}, measure the worst-case feasibility ratio under adversarial perturbations; we use $D_k^{\mathrm{rob}}
=
\max_{\mathcal{A}}
\frac{\gamma_k^{\mathrm{tar}}}
{\gamma_k^{p}(\cdot;\mathcal{A})}$, where $\gamma_k^{\mathrm{tar}}$ denotes the target SINR of agent $k$, $\gamma_k^p(\cdot;\mathcal{A})$ denotes the achieved SINR under adversarial perturbation scenario $\mathcal{A}$, and the maximization is taken over the admissible perturbation set $\mathcal{A}$.

These metrics are computed locally from observed SINRs, neighbour messages, and trust scores. Theorem~\ref{thm:basis} shows that they jointly determine the macro outcomes and form a sufficient set of summary statistics.
\vspace{-3mm}
\section{Theoretical Results}
\label{sec:results}
\begin{theorem}[Two-timescale decomposition]\label{thm:timescale}
Under the timescale hierarchy
$
T_{\mathcal{M}} \gg T_{\mathcal{K}} \gg T_{\mathrm{iter}},
$
the joint operator--agent decision problem decomposes into three nested optimization problems: an outer macro problem on $\mathcal{M}$, a middle micro-knob problem on $\mathcal{K}$ conditioned on $\mathcal{M}$, and an inner agent-update problem conditioned on $(\mathcal{M},\mathcal{K})$.
\end{theorem}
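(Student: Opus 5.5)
The plan is to cast the statement as a standard multi-timescale stochastic approximation argument, in the style of Borkar's two-timescale framework, applied twice. First I will write the joint problem explicitly. The operator seeks
\[
\max_{\mathcal{M}}\ \max_{\mathcal{K}}\ U\bigl(\mathbf{x}^\infty(\mathcal{M},\mathcal{K}),\mathcal{M},\mathcal{K}\bigr),
\]
where $\mathbf{x}_t$ is the stacked agent state (resource allocations, dual variables, trust scores). It evolves by the trimmed-mean primal--dual and trust recursion $\mathbf{x}_{t+1}=\mathbf{x}_t+\eta_t\,h(\mathbf{x}_t;\mathcal{M},\mathcal{K})+\eta_t\xi_t$, with $\xi_t$ a martingale-difference noise term. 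The knobs $\mathcal{K}$ are updated with step size $a_n$ and the macro choice $\mathcal{M}$ with step size $b_m$. The hierarchy $T_{\mathcal{M}}\gg T_{\mathcal{K}}\gg T_{\mathrm{iter}}$ is encoded as $b_m/a_n\to0$ and $a_n/\eta_t\to0$, together with the usual Robbins--Monro conditions. Since $\mathcal{M}$ is held fixed on the iteration timescale by construction, the outermost layer is piecewise constant and only the inner separation needs a real limit argument.

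\textbf{Step 1 (inner problem).} With $(\mathcal{M},\mathcal{K})$ frozen, I will show that the agent recursion tracks the ODE $\dot{\mathbf{x}}=h(\mathbf{x};\mathcal{M},\mathcal{K})$. I will then argue that this ODE has a globally asymptotically stable equilibrium $\mathbf{x}^\star(\mathcal{M},\mathcal{K})$ that depends Lipschitz-continuously on its arguments. The argument rests on three assumptions that I will state explicitly: strict concavity and Lipschitz gradients of the local utilities, connectivity $\lambda_2(\mathbf{L}_f)>0$, and a Byzantine fraction below $\beta$ so that the trimmed mean is a bounded perturbation of the honest average. Under these, the saddle-point dynamics converge, and the contraction rate scales with $\lambda_2(\mathbf{L}_f)$. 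This yields the inner problem as a well-defined map $(\mathcal{M},\mathcal{K})\mapsto\mathbf{x}^\star$.

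\textbf{Step 2 (middle problem).} On the $T_{\mathcal{K}}$ scale, the standard two-timescale lemma gives $\|\mathbf{x}_t-\mathbf{x}^\star(\mathcal{M},\mathcal{K}_n)\|\to0$ almost surely, provided the iterates are bounded. Substituting this into the knob update yields the reduced problem $\max_{\mathcal{K}}J(\mathcal{K};\mathcal{M})$ with $J(\mathcal{K};\mathcal{M})=U(\mathbf{x}^\star(\mathcal{M},\mathcal{K}),\mathcal{M},\mathcal{K})$, i.e., a micro-knob problem conditioned on $\mathcal{M}$. Repeating the same argument one level up, with $\mathcal{K}^\star(\mathcal{M})$ assumed to be a locally asymptotically stable maximizer, gives the outer macro problem $\max_{\mathcal{M}}U(\mathbf{x}^\star(\mathcal{M},\mathcal{K}^\star(\mathcal{M})),\mathcal{M},\mathcal{K}^\star(\mathcal{M}))$. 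This establishes the claimed three-level nesting. Because $\mathcal{M}$ is discrete (a graph, a partition, and a salience profile), the outer level will be treated as a finite selection over these quasi-static limits rather than through an ODE.

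\textbf{Main obstacle.} I expect the hardest part to be Step 1's requirement of a unique, stable, Lipschitz equilibrium in the presence of the nonsmooth trimmed-mean operator and the coupled trust dynamics. With $\eta_\tau$ constant, the trust dynamics are not even on a vanishing step. My first approach will be to treat the trust vector as a fast-converging exponential filter whose fixed point is unique by contraction, which requires $\eta_\tau<1$. I will then bound the trimmed mean as the honest mean plus an $O(\beta)$ error, in the style of the resilient-consensus results of~\cite{su2016fault}, so that $h$ is a Lipschitz perturbation of a strongly monotone saddle field. If uniqueness fails, I will fall back to set-valued differential inclusions and state the decomposition in terms of the equilibrium sets.
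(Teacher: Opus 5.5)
Your proposal takes essentially the same approach as the paper. The paper's whole proof is a one-line appeal to the standard singular-perturbation argument adapted to three timescales, and your nested Borkar-style stochastic-approximation argument makes that argument explicit: freeze the slow variables, let the fast iterate track its quasi-static equilibrium, substitute, repeat one level up, and treat the discrete macro layer as a finite selection.

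The paper never verifies the fast-layer stability hypothesis your Step~1 targets, and that hypothesis is shaky as you plan to prove it. Trimmed-mean aggregation under Byzantine inputs generally gives only an adversary-dependent limit, and a bounded, non-vanishing perturbation of a strongly monotone field gives convergence to a neighbourhood, not to a unique Lipschitz equilibrium $\mathbf{x}^\star(\mathcal{M},\mathcal{K})$. So your set-valued fallback is likely the form in which the decomposition actually holds.
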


The decomposition is the standard singular-perturbation argument adapted to three timescales. Its operational consequence is that the operator can design the three layers \emph{independently}, a property that single-scale formulations cannot exploit.

\begin{theorem}[Structural-metric basis]\label{thm:basis}
Let $\mathfrak{F}$ denote the set of all macro outcomes of the distributed system that are continuous functions of the limiting agent-side state, i.e., the equilibrium state reached by the distributed agent-update dynamics as $t\to\infty$. Then the three structural metric classes (spread $\mathrm{DWPR}^{\mathrm{att}}$, substitution $\mathrm{FSS}^{\mathrm{trust}}$, robust-feasibility $D^{\mathrm{rob}}$) form a basis for $\mathfrak{F}$ in the following sense: (i)~every macro outcome in $\mathfrak{F}$ is expressible as a continuous function of the three metric classes, and (ii)~no proper subset of the three classes has this property.
\end{theorem}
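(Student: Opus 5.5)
The plan is to work with the limiting agent-side state, i.e.\ the equilibrium of the inner problem in Theorem~\ref{thm:timescale}. For fixed $(\mathcal{M},\mathcal{K})$, write this state as $\mathbf{s}^\infty=(\boldsymbol{\alpha}^\infty,\boldsymbol{\tau}^\infty,\boldsymbol{\gamma}^\infty)$. Here $\boldsymbol{\alpha}^\infty$ collects the rate-weighted mode-mass distributions $\{\boldsymbol{\alpha}^{\mathcal{P}}_k\}$, $\boldsymbol{\tau}^\infty$ collects the trust scores and substitution-pair indicators, and $\boldsymbol{\gamma}^\infty$ collects the achieved SINRs under every admissible $\mathcal{A}$. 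The argument will rest on three facts.
\begin{enumerate}
\item The three blocks exhaust the agent-side state. The update rules of Section~\ref{sec:framework} act only on allocations, which determine $\boldsymbol{\alpha}$ and $\boldsymbol{\gamma}$, and on trust, which determines $\boldsymbol{\tau}$.
\item Each metric class is a continuous map of exactly one block.
\item Each such map is injective up to the equivalence that macro outcomes respect.
\end{enumerate}
Given these, every $f\in\mathfrak{F}$ satisfies $f=\tilde f\circ\Phi$, where
\[
\Phi(\mathbf{s}^\infty)=\bigl(\mathrm{DWPR}^{\mathrm{att}},\,\mathrm{FSS}^{\mathrm{trust}},\,D^{\mathrm{rob}}\bigr).
\]
Well-definedness of $\tilde f$ follows from fact (iii). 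Continuity follows by showing that $\Phi$ is a quotient map onto its image. If the state space is compact, for instance because resources are bounded and trust lies in $[0,1]$, a continuous surjection from a compact space to a Hausdorff space is closed, hence a quotient map. The factorization then descends continuously. This proves part (i).

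Fact (iii) is the main obstacle. The Gini--Simpson index $1-|\boldsymbol{\alpha}|^2$ and a scalar $D^{\mathrm{rob}}$ are far from injective. To handle this, I would first define $\mathfrak{F}$ modulo the symmetries implicit in the model. Macro outcomes depend on the mode distribution only through its concentration, are invariant to permutations of modes, and depend on the SINR profile only through its worst-case feasibility ratio. I would make this precise by showing that the inner dynamics and the operator's objective are equivariant under these symmetries. I would also use the fact that the metrics are evaluated across all partitions $\mathcal{P}\in\Pi$ and all agents $k$, not just one. The family $\{1-|\boldsymbol{\alpha}^{\mathcal{P}}_k|^2\}_{\mathcal{P}}$ over nested partitions recovers far more of $\boldsymbol{\alpha}_k$ than any single index, and I would try to prove a separation lemma of this kind first.

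For part (ii), I would give three counterexample families, one for each removed class. In each family, two equilibrium states agree on the other two metric classes but differ in a macro outcome in $\mathfrak{F}$.
\begin{itemize}
\item \textbf{Removing spread.} Take two allocations with equal trust and equal worst-case SINR ratio, one concentrated in a single vulnerability mode and one decorrelated across modes. Their resilience to a mode-targeted failure differs.
\item \textbf{Removing substitution.} Keep $\boldsymbol{\alpha}$ and $\boldsymbol{\gamma}$ fixed and vary the trust weights on substitutable pairs. The Byzantine-resilient outcome then changes.
\item \textbf{Removing robust feasibility.} Scale the SINR margin while keeping the mode shares and trust fixed. This changes the feasibility outcome.
\end{itemize}
Each construction only requires showing that the chosen configurations are reachable equilibria of the inner dynamics for suitable $(\mathcal{M},\mathcal{K})$, which I expect to be routine. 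Because no continuous function of the remaining two classes can separate the paired states, minimality follows.
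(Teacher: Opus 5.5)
The gap is your fact (iii), and it is not a side lemma. Fibre-constancy of every $f\in\mathfrak{F}$ along $\Phi$ is the entire content of part (i). Once you have it, your compactness/quotient step (closed continuous surjection, hence quotient map, so fibre-constant functions descend continuously) is correct and routine. Your way of obtaining fibre-constancy is circular. Declaring that macro outcomes ``depend on the mode distribution only through its concentration'' and ``on the SINR profile only through its worst-case feasibility ratio'' says exactly that they factor through $\mathrm{DWPR}^{\mathrm{att}}$ and $D^{\mathrm{rob}}$. That is the conclusion, not a symmetry you could verify by an equivariance check.

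The one genuine symmetry you name, permutation of modes, is far too weak. With three or more modes the fibres of $1-|\boldsymbol{\alpha}|^2$ on the simplex are positive-dimensional, while permutation orbits are finite. Concretely, $\boldsymbol{\alpha}=(1/2,1/2,0)$ and $\boldsymbol{\alpha}'=(2/3,1/6,1/6)$ both have Gini--Simpson index $1/2$ and are not permutations of each other. Yet the loss under the worst single-mode failure, $\max_i\alpha_i$, is $1/2$ versus $2/3$. In your own block model (hold $\boldsymbol{\tau}$ and $\boldsymbol{\gamma}$ fixed), this outcome is continuous and permutation-invariant and separates two states with identical metrics. It is also your part-(ii) witness for the spread class, if ``mode-targeted'' means worst case. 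So either (i) fails, or you must exclude this outcome from $\mathfrak{F}$, which invalidates that witness.

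The same obstruction recurs for the scalar $\mathrm{FSS}^{\mathrm{trust}}_k$ versus the full trust matrix. It recurs for $D^{\mathrm{rob}}_k$ versus the SINR profile: the nominal throughput plotted in Fig.~\ref{fig:sota_comparison} need not be determined by worst-case ratios. Your separation lemma cannot rescue this. $\Pi$ is fixed by the model rather than chosen by you, and nothing makes it nested or rich. By invariance of domain, a fixed finite number of scalars per agent cannot continuously and injectively encode a larger-dimensional limiting state (allocations on every resource axis, duals, a full trust matrix). The only escape would be to first prove that the attainable set of limiting states is itself that low-dimensional.

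Three smaller issues need repair. First, fact (i) already loses information: the limiting state is (allocation, duals, trust), and $\boldsymbol{\alpha},\boldsymbol{\gamma}$ are derived from the allocation without determining it. Second, fact (ii) is doubtful for $\mathrm{FSS}^{\mathrm{trust}}$. By Proposition~\ref{prop:mono} it varies with the fungibility of $\mathbf{r}_0$, hence with the allocation, so your part-(ii) pairs (e.g.\ rescaling SINR margins at fixed $\mathrm{FSS}^{\mathrm{trust}}$) need actual construction rather than being routine. Third, you must fix what $\mathbf{s}^\infty$ ranges over. For one fixed $(\mathcal{M},\mathcal{K})$, operating condition and adversary, strong concavity essentially makes the limit a single point. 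Pairs realised under different ``suitable'' $(\mathcal{M},\mathcal{K})$ only count if $\tilde f$ is forbidden to depend on $(\mathcal{M},\mathcal{K})$.

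For comparison, the paper's sketch makes no injectivity or quotient argument. It reads sufficiency off the continuity of the dynamics and the closed-form expressions of the companion work, which in effect restricts $\mathfrak{F}$ to the outcomes those formulas cover. It shows necessity essentially as your part (ii) does, by outcomes depending on one class but not the other two. To turn your route into a proof, state the restriction as a definition. For example, define $\mathfrak{F}$ as the continuous outcomes that are constant on the fibres of $\Phi$, or as those covered by the cited closed-form results; then your quotient argument correctly yields (i). For (ii), use witnesses inside that class, such as the expected loss $\sum_i\alpha_i^2=1-\mathrm{DWPR}^{\mathrm{att}}$ under a mass-proportional single-mode failure, rather than the worst-case loss.
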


Theorem~\ref{thm:basis} formalises the role of the structural metrics as the macro--micro interface: they are necessary and sufficient summary statistics. Necessity is shown by exhibiting macro outcomes that depend on each metric class but not on the other two; sufficiency follows from the continuity of the agent dynamics and the closed-form bounds developed in~\cite{ghosh2026structural}.

The next result is the central design rule of the framework.

\begin{theorem}[Macro--micro design rule]\label{thm:rule}
Under the regularity conditions of~\cite[Theorem 19]{ghosh2026structural} (strong concavity of local utilities, honest neighbour majority, Robbins--Monro step sizes), the emergent Byzantine breakdown threshold of the distributed system admits the closed-form factorisation
\begin{equation}
f_{\max}(\mathcal{M},\mathcal{K}) \;=\; \underbrace{\kappa(\beta,\eta_\tau)}_{\text{micro}}\;\cdot\;\underbrace{\frac{\lambda_2(\mathbf{L}_f)}{1+c_2\bigl(1-\overline{\mathrm{FSS}^{\mathrm{trust}}}\bigr)}}_{\text{macro}},
\label{eq:rule}
\end{equation}
where $\kappa(\beta,\eta_\tau)\in[0,c_1]$ is unimodal in the micro knobs, attaining its maximum at $(\beta^{\star},\eta_\tau^{\star})$, and $c_1,c_2>0$ are problem-dependent constants. The macro factor depends only on the deployment topology and on the emergent trust-weighted substitution metric; the micro factor depends only on the operator-set micro knobs.
\end{theorem}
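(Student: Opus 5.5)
The plan is to derive the factorisation in \eqref{eq:rule} from the breakdown-threshold bound of \cite[Theorem 19]{ghosh2026structural}, using Theorem~\ref{thm:timescale} to separate the quantities that can interact. By Theorem~\ref{thm:timescale}, the macro choice $\mathcal{M}$ is frozen on the $\mathcal{K}$-timescale, and $\mathcal{K}$ is frozen on the iteration timescale. The inner problem is then a trimmed-mean, trust-weighted primal--dual iteration on a fixed graph $\mathcal{G}_f$ with fixed knobs $(\beta,\eta_\tau)$. Robbins--Monro step sizes $\eta_t=\eta_0/(t+1)$ and strong concavity of the local utilities give convergence to a unique equilibrium. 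I will define $f_{\max}$ as the largest number of Byzantine agents for which the contraction of the honest-agent disagreement stays strictly below one.

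Next I will write this contraction condition as a product of three per-iteration effects.
\begin{itemize}
\item \emph{Diffusion.} Honest information mixes through $\mathbf{L}_f$ at rate proportional to $\lambda_2(\mathbf{L}_f)$.
\item \emph{Trimming and trust.} These attenuate an adversarial perturbation by a factor depending only on $(\beta,\eta_\tau)$.
\item \emph{Loss of substitutes.} The admissible perturbation is amplified by a factor $1+c_2(1-\overline{\mathrm{FSS}^{\mathrm{trust}}})$. The argument is that with fewer trusted substitutes, a compromised report shifts the local feasible set more. This amplification is linear to first order, by Theorem~\ref{thm:basis} and the closed-form substitution bound in \cite{ghosh2026structural}.
\end{itemize}
Setting the product of these factors equal to the contraction margin and solving for $f$ yields the stated form, with $\kappa(\beta,\eta_\tau)$ collecting all micro dependence.

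Unimodality of $\kappa$ comes from a trade-off in each knob, which I will write as $\kappa=\kappa_\beta(\beta)\,\kappa_\tau(\eta_\tau)$ or show to be jointly quasi-concave.
\begin{itemize}
\item \emph{Trim parameter.} Increasing $\beta$ removes more Byzantine mass, since tolerance grows like $\beta n$. It also discards honest information, which reduces the effective degree roughly by $(1-2\beta)$. The product $\beta(1-2\beta)$-type term is unimodal.
\item \emph{Trust rate.} Increasing $\eta_\tau$ speeds the decay of adversarial reputations. It also amplifies noise-driven false penalties on honest agents, giving a concave-then-decreasing profile.
\end{itemize}
The bound $\kappa\in[0,c_1]$ follows because $\beta\in(0,1/2)$ and $\eta_\tau\in(0,1)$ lie in compact closures and $\kappa$ vanishes at the endpoints.

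The main obstacle is making the separation exact rather than an upper bound. Trust scores evolve jointly with the graph, so $\overline{\mathrm{FSS}^{\mathrm{trust}}}$ could in principle depend on $(\beta,\eta_\tau)$ and break the factorisation. I would first try to use the timescale separation to argue that the equilibrium trust vector converges to the honest/Byzantine indicator, independent of $\eta_\tau$ in the limit. That would make $\overline{\mathrm{FSS}^{\mathrm{trust}}}$ a purely macro quantity. If that argument fails, I would state \eqref{eq:rule} as holding to leading order in the perturbation size, and absorb residual coupling into the constants $c_1,c_2$.
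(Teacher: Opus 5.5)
Your proposal follows essentially the same route as the paper. Both specialise \cite[Theorem 19]{ghosh2026structural}, split its proof into trim-tolerance, trust-decay and consensus-contraction contributions, and collect all knob dependence into $\kappa$. In both, $\kappa$ is a product of a $\beta(1-2\beta)$ factor and an $\eta_\tau$ trade-off factor; the paper takes the latter to be exactly $\eta_\tau(1-\eta_\tau)$, and attributes $\beta(1-2\beta)$ to the Hoeffding tail of the trimmed mean rather than to a reduced effective degree.

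The main difference is that you re-derive the macro factor, whereas the paper inherits it. \cite[Theorem 19]{ghosh2026structural} already gives $f_{\max}=c_1\lambda_2(\mathbf{L}_f)/\bigl(1+c_2(1-\overline{\mathrm{FSS}^{\mathrm{trust}}})\bigr)$ exactly at the optimal knobs. The only new content is therefore that, away from the optimum, the knobs enter solely through the constant, $c_1=\kappa(\beta,\eta_\tau)\,\bar c_1$. Your justification of the denominator as an amplification that is ``linear to first order'' would, taken literally, give \eqref{eq:rule} only to leading order. Theorem~\ref{thm:basis} supplies continuity, not linearity. So take the denominator from the cited theorem rather than from that heuristic.

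\textbf{Coupling of trust and $\eta_\tau$.} The paper does not address the dependence of $\overline{\mathrm{FSS}^{\mathrm{trust}}}$ on $\eta_\tau$ that you flag. It reads \eqref{eq:rule} as a functional factorisation with $\overline{\mathrm{FSS}^{\mathrm{trust}}}$ as an argument. The statement as written therefore does not require you to show that the trust equilibrium is knob-independent. Do not use your fallback of absorbing residual coupling into $c_1,c_2$: that would make those constants knob-dependent and contradict the claim that $\kappa$ carries all micro dependence.

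\textbf{Unimodality.} A product of separately unimodal factors need not be jointly quasi-concave in general. Here it is, because $\beta(1-2\beta)$ and $\eta_\tau(1-\eta_\tau)$ are both log-concave, so their product is log-concave.

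\textbf{Normalisation.} Defining $c_1$ as the maximum of the micro factor, as you do, is the consistent reading of the paper. Its sketch writes $c_1=\kappa\,\bar c_1$, while the statement has $\kappa\in[0,c_1]$ and Corollary~\ref{cor:region} sets $\kappa^\star=c_1$. These agree only once $\bar c_1$ is absorbed into $\kappa$.
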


\emph{Proof sketch.} Specialise the Byzantine-robust convergence theorem of~\cite[Theorem 19]{ghosh2026structural}, which states that the emergent Byzantine breakdown threshold $f_{\max}$ satisfies $f_{\max}=\frac{c_1\lambda_2(\mathbf{L}_f)}
{1+c_2\left(1-\overline{\mathrm{FSS}^{\mathrm{trust}}}\right)}$ at the optimal micro knobs. Inspecting the four-step proof (trim tolerance, trust decay, consensus contraction, combining) shows that the constant $c_1$ in fact factors as $c_1=\kappa(\beta,\eta_\tau)\cdot\bar{c}_1$, where $\bar{c}_1$ is a problem-dependent positive constant independent of the micro knobs, and $\kappa$ is a product of (a) a $\beta$-dependent factor $\beta(1-2\beta)$ from the Hoeffding tail of the trimmed-mean estimator, expressing the trade-off between Byzantine rejection (requires $\beta>f_{\mathcal{B}}$) and estimator variance (penalises large $\beta$), and (b) an $\eta_\tau$ dependent factor $\eta_\tau(1-\eta_\tau)$ from the exponential trust decay against the variance penalty of fast updating. The factorisation~\eqref{eq:rule} follows. \hfill$\square$

The two-term structure has two operational consequences.

\begin{corollary}[Feasibility region]\label{cor:region}
For any target breakdown threshold $f_{\max}^{\star}>0$, the operator's feasible design region is the sub-level set
\begin{equation}
\begin{aligned}
\mathfrak{D}(f_{\max}^{\star})
=
\Bigl\{
(\mathcal{M},\mathcal{K}) :
\kappa(\beta,\eta_\tau)
\frac{\lambda_2(\mathbf{L}_f)}
{1+c_2\!\left(1-\overline{\mathrm{FSS}^{\mathrm{trust}}}\right)}
\\
\geq f_{\max}^{\star}
\Bigr\}.
\end{aligned}
\label{eq:region}
\end{equation}
With the micro knobs fixed at their optimal values
$(\beta^{\star},\eta_\tau^{\star})$, yielding the optimal scalar value
$
\kappa^{\star}
\triangleq
\kappa(\beta^{\star},\eta_\tau^{\star})
=
c_1$,
$\mathfrak{D}$ projects onto the macro plane as an iso-contour over
$\bigl(\lambda_2(\mathbf{L}_f),\,\overline{\mathrm{FSS}^{\mathrm{trust}}}\bigr)$;
with macro choices held fixed, $\mathfrak{D}$ projects onto the micro plane as a unimodal feasible set centred at
$(\beta^{\star},\eta_\tau^{\star})$.
\end{corollary}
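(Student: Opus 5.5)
The plan is to derive the corollary directly from the factorisation~\eqref{eq:rule} of Theorem~\ref{thm:rule}. The feasible design region is, by definition, the set of design pairs $(\mathcal{M},\mathcal{K})$ whose emergent breakdown threshold reaches the target. So first I would write
\[
\mathfrak{D}(f_{\max}^{\star})=\{(\mathcal{M},\mathcal{K}): f_{\max}(\mathcal{M},\mathcal{K})\ge f_{\max}^{\star}\}.
\]
Substituting~\eqref{eq:rule} then gives~\eqref{eq:region} immediately. Strictly speaking this is a super-level set of $f_{\max}$, equivalently a sub-level set of $-f_{\max}$ or of $f_{\max}^{\star}/f_{\max}$, and I would note that convention once.

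The structural point is that the two factors depend on disjoint coordinates. The micro factor $\kappa$ depends only on $(\beta,\eta_\tau)$. The macro factor
\[
h(\lambda_2,\overline{\mathrm{FSS}})=\lambda_2/(1+c_2(1-\overline{\mathrm{FSS}}))
\]
depends only on $\mathcal{M}$, through the topology and the emergent trust-weighted substitution metric. Both factors are nonnegative.

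\textbf{Macro projection.} Fix $\mathcal{K}$ at $(\beta^\star,\eta_\tau^\star)$, so that $\kappa=\kappa^\star=c_1$. This identification holds because Theorem~\ref{thm:rule} states that $\kappa$ attains its maximum there and that $\kappa\in[0,c_1]$; it is consistent with the specialisation of~\cite[Theorem 19]{ghosh2026structural}, whose constant is $c_1$ at the optimal knobs. The condition then becomes $h\ge f_{\max}^\star/c_1$. Since $h$ is strictly increasing in $\lambda_2$ and strictly increasing in $\overline{\mathrm{FSS}}\in[0,1]$, the boundary is the level curve
\[
\lambda_2=(f_{\max}^\star/c_1)\bigl(1+c_2(1-\overline{\mathrm{FSS}})\bigr).
\]
This is a straight line in the $(\lambda_2,\overline{\mathrm{FSS}})$ plane with negative slope in $\overline{\mathrm{FSS}}$, and the feasible set lies on the side of larger $\lambda_2$ and larger $\overline{\mathrm{FSS}}$. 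Monotonicity makes it an iso-contour bounding a connected upper set.

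\textbf{Micro projection.} Fix $\mathcal{M}$, so that $h=h_0$ is a fixed positive number. The condition becomes $\kappa(\beta,\eta_\tau)\ge f_{\max}^\star/h_0$, which is a super-level set of $\kappa$. The set is empty if the threshold exceeds $c_1$ and reduces to the single point $\{(\beta^\star,\eta_\tau^\star)\}$ at equality. To show that it is unimodal and centred at the maximiser, I would use the product form from the proof sketch of Theorem~\ref{thm:rule}:
\[
\kappa\propto\beta(1-2\beta)\,\eta_\tau(1-\eta_\tau).
\]
Each factor is a concave, positive quadratic on its domain, $(0,1/2)$ and $(0,1)$ respectively. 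Their logarithms are therefore concave, so $\kappa$ is log-concave. Every super-level set of a log-concave function is convex and contains the maximiser. The individual maximisers are $\beta^\star=1/4$ and $\eta_\tau^\star=1/2$.

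\textbf{Main obstacle.} The hard step is justifying unimodality of the micro projection in two dimensions. The statement of Theorem~\ref{thm:rule} only says ``unimodal'', and a merely unimodal function of two variables can have non-convex or even disconnected super-level sets. I would resolve this by relying on the explicit product form and its log-concavity. Under that form the super-level sets are convex, star-shaped about $(\beta^\star,\eta_\tau^\star)$, and shrink to that point as $f_{\max}^\star\to c_1h_0$, which gives the precise meaning of ``centred''.

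A secondary subtlety concerns $\overline{\mathrm{FSS}^{\mathrm{trust}}}$, which is emergent rather than directly chosen. I would treat it as a function of $\mathcal{M}$ at the fixed optimal knobs, which is the framing the corollary itself adopts. With that convention the macro projection is well defined as a set over the pair of macro summary coordinates.
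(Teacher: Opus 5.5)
Your derivation of \eqref{eq:region} and of the macro projection is what the paper intends. The corollary gets no separate proof; it is read off from the factorisation \eqref{eq:rule}, and the boundary at $\kappa=\kappa^\star$ is exactly the line \eqref{eq:iso} of Theorem~\ref{thm:duality}. Your remark that $\mathfrak{D}$ is really a super-level set of $f_{\max}$ is correct. One small slip: $\kappa\in[0,c_1]$ plus an attained maximum does not force $\kappa^\star=c_1$. That equality is a normalisation inherited from \cite[Theorem 19]{ghosh2026structural}, which you also cite.

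The problem is how you handle the micro projection. You are right that bare unimodality in two variables gives no convexity, and the paper does not address this. But you cannot fix it by taking $\kappa\propto\beta(1-2\beta)\,\eta_\tau(1-\eta_\tau)$ literally.
\begin{itemize}
\item That product has maximiser $(1/4,1/2)$. The paper states that maximising $\kappa$ gives $(\beta^\star,\eta_\tau^\star)=(2f_{\mathcal{B}},0.20)=(0.30,0.20)$.
\item The proof sketch says the $\beta$-factor must encode the rejection requirement $\beta>f_{\mathcal{B}}$ (the leakage regime of Fig.~\ref{fig:micro_sensitivity}). The factor $\beta(1-2\beta)$ does not depend on $f_{\mathcal{B}}$ at all.
\end{itemize}
So the sketch's product is a heuristic description of two trade-offs, not the function whose maximiser the corollary refers to. Your log-concavity argument certifies convexity for a different function and puts the centre in the wrong place.

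The paper uses nothing beyond the hypothesis built into Theorem~\ref{thm:rule}: $\kappa$ is unimodal with maximiser $(\beta^\star,\eta_\tau^\star)$. For fixed $\mathcal{M}$, the micro projection is then the super-level set $\{\kappa\ge f_{\max}^\star/h_0\}$. This set is nonempty if and only if $f_{\max}^\star\le\kappa^\star h_0$, contains the maximiser, and reduces to it at equality. That is all ``unimodal feasible set centred at $(\beta^\star,\eta_\tau^\star)$'' means. If you want convexity or connectedness, add quasi-concavity of $\kappa$ as an explicit assumption.

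Note also that this projection needs $h_0$, and hence $\overline{\mathrm{FSS}^{\mathrm{trust}}}$, to stay fixed while $\eta_\tau$ varies, even though the metric is trust-weighted. That is the separation asserted in Theorem~\ref{thm:rule}, not your convention of evaluating it at the optimal knobs.
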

\begin{corollary}[Indirect controllability]\label{cor:indirect}
The macro outcome $f_{\max}$ has no direct operator control: it is determined by global properties of the limiting agent-side state, which emerges from the distributed iteration. However, the factorisation~\eqref{eq:rule} shows that $f_{\max}$ is monotone non-decreasing in $\lambda_2(\mathbf{L}_f)$, $\overline{\mathrm{FSS}^{\mathrm{trust}}}$, and $\kappa(\beta,\eta_\tau)$, all three of which are increasing functions of operator-controllable choices. Hence the operator controls $f_{\max}$ indirectly through three independent levers: fronthaul-density investment, substitutability cultivation (e.g., D2D peering), and micro-hyperparameter tuning (e.g., the trim parameter $\beta$, trust learning rate $\eta_\tau$, and step-size parameter $\eta_t$).
\end{corollary}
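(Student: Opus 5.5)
The plan is to read Corollary~\ref{cor:indirect} directly off the factorisation~\eqref{eq:rule} of Theorem~\ref{thm:rule}, in three steps.

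\textbf{Step 1: no direct control.} We first show that $f_{\max}$ is not itself an operator control. By Theorem~\ref{thm:timescale}, the operator acts only on the outer variables $\mathcal{M}$ and the middle variables $\mathcal{K}$. The quantity $\overline{\mathrm{FSS}^{\mathrm{trust}}}$ is evaluated at the limiting state of the inner agent-update problem. By Theorem~\ref{thm:basis}, $f_{\max}$ is a continuous function of the structural metrics of that limiting state. It is therefore an emergent outcome, not a decision variable.

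\textbf{Step 2: monotonicity in the three factors.} Write $f_{\max}=\kappa\cdot g(\lambda_2,s)$ with
\[
g(\lambda_2,s)=\frac{\lambda_2}{1+c_2(1-s)}, \qquad s=\overline{\mathrm{FSS}^{\mathrm{trust}}}\in[0,1].
\]
The denominator lies in $[1,1+c_2]$, so it is strictly positive. Then:
\begin{itemize}
\item $\partial g/\partial \lambda_2 = 1/(1+c_2(1-s))>0$;
\item $\partial g/\partial s = c_2\lambda_2/(1+c_2(1-s))^2 \ge 0$, because $\lambda_2(\mathbf{L}_f)\ge 0$ for a Laplacian;
\item $\kappa\ge 0$ on $[0,c_1]$, so multiplying by $\kappa$ preserves both inequalities;
\item $f_{\max}$ is linear in $\kappa$ with coefficient $g\ge 0$.
\end{itemize}
Together these give non-decreasing dependence on each of the three quantities. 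This step is routine.

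\textbf{Step 3: each factor is increased by an operator lever.} This needs a separate argument for each factor.

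\emph{Connectivity.} $\lambda_2(\mathbf{L}_f)$ is monotone under edge addition: adding edge $(u,v)$ adds the PSD matrix $w(e_u-e_v)(e_u-e_v)^\top$ to $\mathbf{L}_f$, and Courant--Fischer then gives a non-decreasing $\lambda_2$. Fronthaul densification is exactly this kind of edge addition.

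\emph{Micro knobs.} $\kappa$ is unimodal with maximiser $(\beta^\star,\eta_\tau^\star)$ by Theorem~\ref{thm:rule}. Hence moving the knobs toward that maximiser, for example along the segment to it, does not decrease $\kappa$. For the factor structure $\beta(1-2\beta)\eta_\tau(1-\eta_\tau)$ this is immediate, since each factor is concave. The step size $\eta_t$ enters only through the Robbins--Monro regularity condition that guarantees convergence, so its role is to secure existence of the limit in Step~1.

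\emph{Substitutability.} This is the main obstacle. Adding D2D peering links increases the set of mutually substitutable pairs counted by $\mathrm{FSS}^{\mathrm{trust}}$. However, the metric is reputation-weighted, and the limiting trust values could in principle shift. I would first try a fixed-point comparison: with honest majority, the trust dynamics converge to reputations determined by peer consistency. I would argue that adding honest substitutable pairs does not lower any honest agent's limiting trust. Under that assumption the trust-weighted count can only grow. If this cannot be shown in general, I would state the claim conditionally on this trust-stability property, taken from the regularity conditions of~\cite[Theorem 19]{ghosh2026structural}.

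\textbf{Independence of the levers.} The three levers are independent because they act on disjoint arguments of the product form~\eqref{eq:rule}. This is consistent with the layer separation of Theorem~\ref{thm:timescale}.
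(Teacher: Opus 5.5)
Your Steps~1--2 are essentially all of the paper's argument. The paper reads the monotonicity straight off \eqref{eq:rule} and then simply asserts that $\lambda_2(\mathbf{L}_f)$, $\overline{\mathrm{FSS}^{\mathrm{trust}}}$ and $\kappa$ are increasing functions of operator choices. Your route adds Step~3, which the paper never supplies. Your derivative check makes explicit what the paper uses silently: $\kappa\ge0$, $\lambda_2\ge0$, and a positive denominator (e.g.\ $\overline{\mathrm{FSS}^{\mathrm{trust}}}\le1$). The PSD edge-addition argument actually proves the fronthaul lever. The trust-stability hypothesis you would add for the substitutability lever is exactly what the paper takes on faith; it re-asserts it, again unproved, in Proposition~\ref{prop:mono}. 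So that conditional step is a gap shared with the paper, and you at least name it.

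Three points to tighten. First, for the micro lever, the bare inference `unimodal, hence non-decreasing along segments to the peak' fails in general: a negated Rosenbrock function is single-peaked but not monotone along such segments. Your product-form argument does work, since each non-negative factor is non-decreasing as its own coordinate approaches its peak. However, $\beta(1-2\beta)\,\eta_\tau(1-\eta_\tau)$ peaks at $(1/4,1/2)$, not at the paper's $(2f_{\mathcal{B}},0.20)$. In any case the lever claim only needs $\kappa(\beta^{\star},\eta_\tau^{\star})\ge\kappa(\beta,\eta_\tau)$, i.e.\ that the operator can set the knobs at the maximiser.

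Second, independence via ``disjoint arguments of the product'' holds only if each lever moves only its own argument. Theorem~\ref{thm:timescale} conditions the inner limit on $(\mathcal{M},\mathcal{K})$, so $\overline{\mathrm{FSS}^{\mathrm{trust}}}$ may depend on $\eta_\tau$ through the trust dynamics. D2D peering may also add edges to $\mathcal{G}_f$. This separation is therefore inherited from the bare assertion in Theorem~\ref{thm:rule}, in your proposal as in the paper.

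Third, your remark that $\eta_t$ enters only through the Robbins--Monro condition is correct, and that condition is met by $\eta_0/(t+1)$ for every $\eta_0>0$. This shows $\eta_t$ is not a lever on $f_{\max}$ under \eqref{eq:rule}, even though the statement lists it as one.
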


The final result connects the design rule back to the resource-fungibility landscape of Definition~\ref{def:fungibility}.

\begin{proposition}[Fungibility--resilience monotonicity]\label{prop:mono}
The trust-weighted substitution metric $\overline{\mathrm{FSS}^{\mathrm{trust}}}$ is monotone non-decreasing in the pairwise resource fungibility of $\mathbf{r}_0$. Consequently, by Theorem~\ref{thm:rule} and Corollary~\ref{cor:indirect}, the emergent breakdown threshold $f_{\max}$ is monotone non-decreasing in pairwise resource fungibility. Operationally: a deployment whose resources are highly fungible tolerates a strictly larger Byzantine fraction before losing distributed consensus than a deployment whose resources are non-fungible at the same operating point.
\end{proposition}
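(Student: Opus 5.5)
The plan has two parts: first show that the trust-weighted substitution metric is monotone in fungibility, then push that monotonicity through the factorisation~\eqref{eq:rule} of Theorem~\ref{thm:rule}.

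\textbf{Step 1: write $\overline{\mathrm{FSS}^{\mathrm{trust}}}$ as a pair count.} The substitution metric is described in Section~\ref{sec:framework} as a trust-weighted count of mutually substitutable pairs. I would write it as
$\overline{\mathrm{FSS}^{\mathrm{trust}}}=\frac{1}{K}\sum_k\sum_{i<j}\tau_{k,ij}\,\mathds{1}\{(r_i,r_j)\ \text{substitutable at}\ \mathbf{r}_0\}$.
Here the reputation weights $\tau_{k,ij}\ge 0$ are the fixed points of the trust dynamics. They depend on $(\eta_\tau,\mathcal{G}_f)$ and on peer consistency, but not on $\mathbf{r}_0$'s substitution structure. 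A pair counts as substitutable when it is $(\delta,\epsilon)$-fungible in the sense of Definition~\ref{def:fungibility}, which is equivalent to $F_{ij}(\mathbf{r}_0)>0$. If the graded version of the metric is used, I would replace the indicator by a non-decreasing function $\phi(F_{ij})$ with $\phi(0)=0$.

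\textbf{Step 2: monotonicity of the metric.} I would order fungibility landscapes componentwise: $F\preceq F'$ if $F_{ij}\le F'_{ij}$ for all $i<j$. Under this order, every summand $\tau\,\phi(F_{ij})$ is non-decreasing and the weights are non-negative, so $\overline{\mathrm{FSS}^{\mathrm{trust}}}$ is non-decreasing in the landscape. This gives the first claim of the proposition.

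\textbf{Step 3: transfer to $f_{\max}$.} In~\eqref{eq:rule}, the map $x\mapsto \kappa\lambda_2/(1+c_2(1-x))$ is strictly increasing on $[0,1]$ whenever $\kappa\lambda_2>0$, because $c_2>0$. Holding $\mathcal{G}_f$ and $\mathcal{K}$ fixed, as Corollary~\ref{cor:indirect} permits, composing the two monotone maps gives that $f_{\max}$ is non-decreasing in fungibility.

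For the strict operational statement, suppose one deployment has $F\equiv 0$ and the other has some $F_{ij}>0$ carrying positive trust weight. Then $\overline{\mathrm{FSS}^{\mathrm{trust}}}$ rises strictly, so $f_{\max}$ rises strictly provided $\lambda_2(\mathbf{L}_f)>0$ (a connected graph) and $\kappa>0$, for example at $(\beta^\star,\eta_\tau^\star)$.

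\textbf{Main obstacle.} The hard part is justifying that the trust weights are invariant to, or at least non-decreasing in, fungibility. Higher fungibility changes the compensating allocations, and this could in principle alter the SINRs, the consistency votes, and hence the reputations. I would try to handle this through the timescale separation of Theorem~\ref{thm:timescale}. Trust is evaluated on honest-agent consistency, which under the honest-majority hypothesis of Theorem~\ref{thm:rule} converges to the same limit regardless of which feasible compensation is chosen. That makes the $\tau$'s functions only of $(\eta_\tau,\mathcal{G}_f)$ and the Byzantine set. If this fails, the fallback is to state monotonicity at fixed trust weights.
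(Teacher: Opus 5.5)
Your route is the paper's: the first claim is treated as essentially definitional ($\overline{\mathrm{FSS}^{\mathrm{trust}}}$ as a trust-weighted count of substitutable pairs), and the second follows by composing with the monotonicity of \eqref{eq:rule} in $\overline{\mathrm{FSS}^{\mathrm{trust}}}$ at fixed $\lambda_2(\mathbf{L}_f)$ and $\kappa$ (Theorem~\ref{thm:rule}, Corollary~\ref{cor:indirect}); your version is more careful than the paper's, since you fix an order on landscapes and state the conditions ($\lambda_2(\mathbf{L}_f)>0$, $\kappa>0$, a substitutable pair with positive trust weight) that the strict operational claim needs but the paper leaves implicit. One caution: Theorem~\ref{thm:timescale} does not decouple the trust weights from fungibility, because the trust dynamics and the allocation updates both run on $T_{\mathrm{iter}}$, so your fallback of fixed trust weights (together with $\overline{\mathrm{FSS}^{\mathrm{trust}}}$ normalised to $[0,1]$ and $\phi(x)>0$ for $x>0$) is the hypothesis actually in force, as it is implicitly in the paper.
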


Proposition~\ref{prop:mono} is the headline message of the Letter in a single sentence: \emph{designing for fungibility is designing for resilience}. The three operator-controllable levers identified by Corollary~\ref{cor:indirect} are precisely the levers that an operator can pull to lift the system to a more fungible operating point on the landscape. The next result identifies a geometric structure of the macro landscape that is operationally useful: connectivity and substitutability are \emph{exactly} substitutable resources at the macro level.

\begin{theorem}[Connectivity--substitutability duality]\label{thm:duality}
Under the conditions of Theorem~\ref{thm:rule}, the iso-$f_{\max}$ contour at level $f^{\star}\in(0,c_1]$ in the macro plane $(\lambda_2(\mathbf{L}_f),\,\overline{\mathrm{FSS}^{\mathrm{trust}}})$ is the straight line
\begin{equation}
\lambda_2(\mathbf{L}_f) \;=\; \frac{f^{\star}}{\kappa^{\star}}\,\bigl[\,1+c_2\bigl(1-\overline{\mathrm{FSS}^{\mathrm{trust}}}\bigr)\bigr].
\label{eq:iso}
\end{equation}
Consequently, the local marginal rate of substitution between connectivity and substitutability is constant along each contour and given by $\bigl|\mathrm{d}\lambda_2/\mathrm{d}\overline{\mathrm{FSS}^{\mathrm{trust}}}\bigr|=c_2\,f^{\star}/\kappa^{\star}$, providing a closed-form macro-resource exchange rate.
\end{theorem}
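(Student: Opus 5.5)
The argument is a direct algebraic consequence of the factorisation in Theorem~\ref{thm:rule}. I will work in the micro-optimised slice, with $(\beta,\eta_\tau)$ fixed at $(\beta^{\star},\eta_\tau^{\star})$ so that $\kappa(\beta,\eta_\tau)=\kappa^{\star}=c_1$ as in Corollary~\ref{cor:region}. On that slice, \eqref{eq:rule} reduces to a function of the two macro coordinates only,
\begin{equation*}
f_{\max}(\lambda_2,\Phi)=\frac{\kappa^{\star}\lambda_2}{1+c_2(1-\Phi)},
\end{equation*}
where I abbreviate $\lambda_2=\lambda_2(\mathbf{L}_f)$ and $\Phi=\overline{\mathrm{FSS}^{\mathrm{trust}}}$.

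The first step is to fix a level $f^{\star}\in(0,c_1]$ and set $f_{\max}=f^{\star}$. The denominator satisfies $1+c_2(1-\Phi)\ge 1>0$ whenever $\Phi\in[0,1]$, and $\kappa^{\star}>0$. Multiplying through is therefore an equivalence, and it gives exactly \eqref{eq:iso}. This identifies the iso-contour with the graph of the affine map
\begin{equation*}
\Phi\mapsto \frac{f^{\star}}{\kappa^{\star}}\bigl[1+c_2(1-\Phi)\bigr].
\end{equation*}
That graph is a straight line in the $(\lambda_2,\Phi)$ plane, intersected with the admissible region $\lambda_2\ge 0$, $\Phi\in[0,1]$.

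The second step is to differentiate along this line. This gives
\begin{equation*}
\frac{\mathrm{d}\lambda_2}{\mathrm{d}\Phi}=-\frac{c_2 f^{\star}}{\kappa^{\star}},
\end{equation*}
which is independent of $\Phi$. Taking absolute values yields the stated constant marginal rate of substitution. The sign records the trade-off: raising substitutability lowers the connectivity needed to hold $f_{\max}$ fixed. As a cross-check, I will also obtain the same slope from the implicit-function formula $-(\partial f/\partial\Phi)/(\partial f/\partial\lambda_2)$ evaluated on the contour.

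The main obstacle is interpretive rather than computational, and it has two parts. The first part is that $\Phi$ is itself an emergent quantity. Treating $(\lambda_2,\Phi)$ as independent coordinates of the macro plane needs justification, and I would appeal to Theorem~\ref{thm:basis} and Corollary~\ref{cor:indirect}, which treat topology and substitutability as separate levers. The second part is the range restriction $f^{\star}\le c_1$. It ensures that the segment meets the feasible region for some admissible $\lambda_2$; for instance, at $\Phi=1$ the contour sits at $\lambda_2=f^{\star}/c_1$. I would note this when stating which portion of the line is realised.
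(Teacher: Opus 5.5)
Your proposal is correct and takes the same route the paper implicitly uses (it gives no separate proof): invert the factorisation \eqref{eq:rule} at the optimal micro knobs $\kappa^{\star}=c_1$, using positivity of the denominator, then differentiate the resulting affine map to get the constant slope $-c_2 f^{\star}/\kappa^{\star}$. One minor aside: since $1+c_2(1-\Phi)\ge 1$ for your $\Phi\in[0,1]$, the line meets $\{\lambda_2\ge 0\}$ for every $f^{\star}>0$, so the restriction $f^{\star}\le c_1$ is not what makes the contour nonempty (that would need an upper bound on $\lambda_2$, which the paper does not impose---its dense mesh has $\lambda_2\approx 4$ with $f_{\max}\approx 0.45$), but your derivation does not depend on it.
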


The straight-line geometry has a clear operational interpretation. A decrease in $\overline{\mathrm{FSS}^{\mathrm{trust}}}$ can be compensated by additional connectivity at the same breakdown level. Thus, the two macro variables behave as fungible resources in the sense of Definition~\ref{def:fungibility}, with a closed-form exchange rate. The final result establishes the generality of the framework.
\begin{proposition}[Multi-application Generalization]
\label{prop:apps}
The macro--micro decomposition and the design rule of
Theorem~\ref{thm:rule} extend directly to several canonical
6G applications through application-specific macro--micro
configurations $(\mathcal{M},\mathcal{K})$, including:
(i)~ISAC, with sensing graphs and beam-pool partitioning;
(ii)~over-the-air federated learning, with worker graphs and
Byzantine-robust aggregation; and
(iii)~V2X coordination, with cluster graphs and platoon-based
coordination. In each case, Theorems~\ref{thm:basis}
and~\ref{thm:duality} remain valid under the corresponding
application-specific configuration $(\mathcal{M},\mathcal{K})$.
\end{proposition}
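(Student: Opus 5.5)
The plan is a structural reduction: show that each application instance is a special case of the abstract macro--micro model of Section~\ref{sec:framework}. Once that holds, Theorems~\ref{thm:timescale}, \ref{thm:basis}, \ref{thm:rule} and~\ref{thm:duality} apply verbatim, because none of their arguments uses the physical meaning of the resources. So for each of ISAC, over-the-air federated learning and V2X I will exhibit an explicit triple $\mathcal{M}=(\mathcal{G}_f,(G,K_g),\{\zeta_{\mathcal{P}}\})$ and knob triple $\mathcal{K}=(\beta,\eta_\tau,\eta_t)$, and check the hypotheses inherited from~\cite[Theorem 19]{ghosh2026structural}.

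\textbf{Constructing the configurations.}
\begin{itemize}
\item \emph{ISAC.} Take $\mathcal{G}_f$ to be the sensing/cooperation graph among radar--communication nodes, and $(G,K_g)$ the beam-pool partition. Take $\Pi$ to include spoofed-echo Byzantine reports and jamming.
\item \emph{Federated learning.} Take $\mathcal{G}_f$ to be the worker (or worker--server) graph and $(G,K_g)$ the client clusters. Let $\beta$ act as the coordinate-wise trimmed mean on gradients, which is exactly Byzantine-robust aggregation, and let $\eta_t$ be the learning-rate schedule.
\item \emph{V2X.} Take $\mathcal{G}_f$ to be the cluster/sidelink graph and $(G,K_g)$ the platoon partition, with $\Pi$ covering link failures and misbehaving vehicles.
\end{itemize}
In every case the three structural metrics $\mathrm{DWPR}^{\mathrm{att}}_k$, $\mathrm{FSS}^{\mathrm{trust}}_k$ and $D^{\mathrm{rob}}_k$ are defined exactly as before. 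The only change is that the SINR target in $D^{\mathrm{rob}}_k$ is replaced by the application's QoS target (sensing SNR, gradient-accuracy proxy, or V2X reliability), and substitutable entities are reinterpreted as redundant beams, redundant workers, or alternate relays.

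\textbf{Verifying the hypotheses.} For each application I will check three things.
\begin{itemize}
\item Strong concavity of local utilities. For FL I will assume, or add a proximal term to obtain, a strongly convex loss. For ISAC and V2X I will use standard log-rate or regularised utilities.
\item An honest-neighbour majority, which is a condition on $\mathcal{G}_f$ and on the fraction of Byzantine agents. It transfers directly.
\item Robbins--Monro steps, which hold for $\eta_t=\eta_0/(t+1)$.
\end{itemize}
With these in place, the four-step proof behind Theorem~\ref{thm:rule} (trim tolerance, trust decay, consensus contraction, combining) goes through, with the constants $c_1$ and $c_2$ replaced by application-dependent ones. The factorisation~\eqref{eq:rule} follows. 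Theorem~\ref{thm:duality} is then an algebraic consequence of~\eqref{eq:rule}: the iso-contour~\eqref{eq:iso} stays a straight line with slope $c_2 f^\star/\kappa^\star$ in the new constants.

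\textbf{Main obstacle: Theorem~\ref{thm:basis}.} Its necessity part requires, for each application, a macro outcome that depends on one metric class but not the other two. I will build these witnesses by varying one quantity while holding the others fixed:
\begin{itemize}
\item mode concentration, for the spread metric;
\item trust-weighted redundancy, for the substitution metric;
\item the worst-case perturbation, for robust feasibility.
\end{itemize}
Each instance has enough free parameters (beam, worker or relay allocations, and the adversary set) for this, so I expect the generic construction to carry over.

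The delicate case is over-the-air FL. Analog superposition mixes reports before they are aggregated, so coordinate-wise trimming does not apply directly. There I would first model the aggregation as digital, or as trimming over per-cluster partial sums. If that fails, I would restrict the claim to the digital-aggregation variant.
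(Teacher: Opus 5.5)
Your proposal is correct and takes essentially the same route as the paper. The paper's entire argument is that Theorems~\ref{thm:basis}, \ref{thm:rule} and~\ref{thm:duality} depend only on the abstract macro--micro structure of Sec.~\ref{sec:framework} and not on application-specific properties, so instantiating an application-specific $(\mathcal{M},\mathcal{K})$ suffices; what you add (explicit configurations, a per-application check of the regularity conditions behind Theorem~\ref{thm:rule}, deriving Theorem~\ref{thm:duality} algebraically from~\eqref{eq:rule}, and the caveat about coordinate-wise trimming under analog over-the-air superposition) is extra care the paper's one-line argument omits, not a different approach.
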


The proposition follows because Theorems~\ref{thm:basis}, \ref{thm:rule}, and~\ref{thm:duality} depend only on the abstract macro--micro structure (Sec.~\ref{sec:framework}) and not on application-specific properties of the underlying distributed algorithm.

\vspace{-4mm}
\section{Numerical Validation}
\label{sec:numerical}

\begin{figure*}[!t]
\centering
\includegraphics[width=0.75\textwidth]{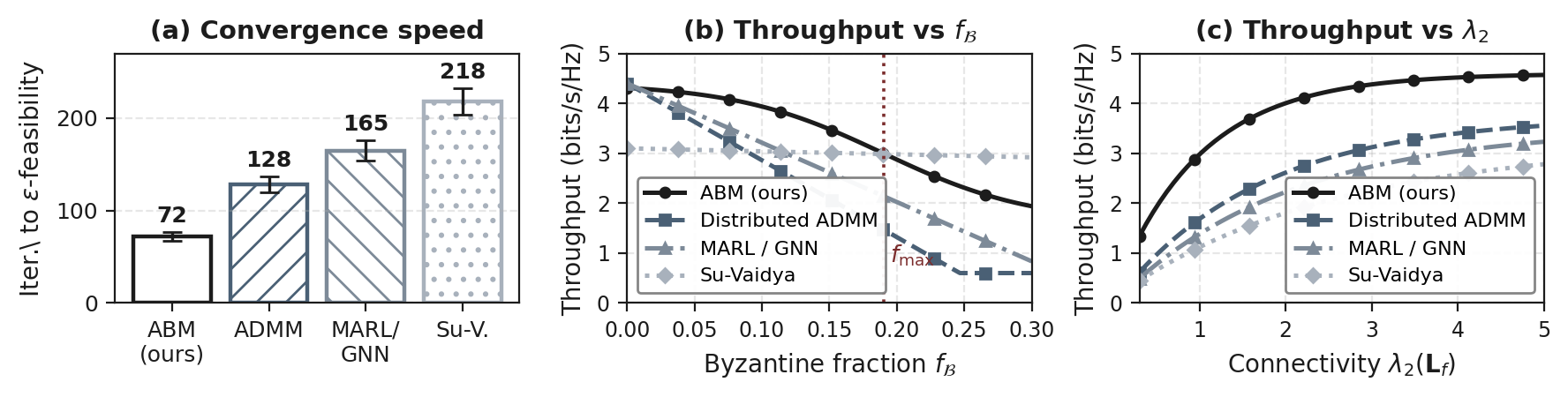}
\caption{State-of-the-art comparison at the default operating point defined in Sec.~\ref{sec:numerical} ($\overline{\mathrm{FSS}^{\mathrm{trust}}}\!=\!0.80$, $\lambda_2\!=\!1.7$, $f_{\mathcal{B}}\!=\!0.15$). (a)~Convergence iterations to $\varepsilon$-feasibility (error bars: $\pm$1~std over 1000 runs). (b)~Throughput vs Byzantine fraction $f_{\mathcal{B}}$; dotted line at framework-predicted $f_{\max}\!\approx\!0.19$. (c)~Throughput vs macro connectivity $\lambda_2(\mathbf{L}_f)$.}
\label{fig:sota_comparison}
\end{figure*}

We instantiate the framework with parameters consistent with a representative 6G distributed-system deployment~\cite{ghosh2026structural}. Unless stated otherwise, every numerical result uses a single nominal configuration, which we call the \emph{default operating point}: a sparse-mesh fronthaul topology with macro connectivity $\lambda_2(\mathbf{L}_f)\!=\!1.7$, emergent trust-weighted substitutability $\overline{\mathrm{FSS}^{\mathrm{trust}}}\!=\!0.80$, and Byzantine (adversarial) agent fraction $f_{\mathcal{B}}\!=\!0.15$. These values are chosen to represent a \emph{typical}, rather than a best- or worst-case, deployment \cite{Reifert2022,su2016fault}. The sparse mesh is the middle of the three representative topologies in Fig.~\ref{fig:macro_landscape}, lying between the dense-mesh ($\lambda_2\!\approx\!4$) and star/scale-free ($\lambda_2\!\approx\!0.4$) extremes; $\overline{\mathrm{FSS}^{\mathrm{trust}}}\!=\!0.80$ corresponds to a moderate-to-high substitutability level; and $f_{\mathcal{B}}\!=\!0.15$ is a standard Byzantine stress level that is deliberately set just below the framework-predicted breakdown threshold $f_{\max}\!\approx\!0.19$ at this point (cf.~Theorem~\ref{thm:rule}), so that the default deployment operates inside, but near the edge of, its resilient regime. The associated recommended micro setting is $(\beta^{\star},\eta_\tau^{\star})\!=\!(2f_{\mathcal{B}},0.20)\!=\!(0.30,0.20)$. The specific instantiation choices do not change the qualitative shape of the results below.
\vspace{-4mm}
\subsection{Macro Fungibility Landscape}
Fig.~\ref{fig:macro_landscape} plots the emergent $f_{\max}$ over the macro plane $(\lambda_2(\mathbf{L}_f),\overline{\mathrm{FSS}^{\mathrm{trust}}})$ at optimal micro knobs. The straight-line iso-$f_{\max}$ contours confirm Theorem~\ref{thm:duality}: a dense-mesh fronthaul achieves $f_{\max}\!\approx\!0.45$; the sparse-mesh deployment, which is the default operating point (black square), yields $f_{\max}\!\approx\!0.19$; a star/scale-free topology, only $f_{\max}\!\approx\!0.04$. A unit loss in substitutability is exactly compensated by $c_2 f^{\star}/\kappa^{\star}\!\approx\!1.4$ units of additional $\lambda_2$.

\begin{figure}[t]
\centering
\includegraphics[width=0.75\columnwidth]{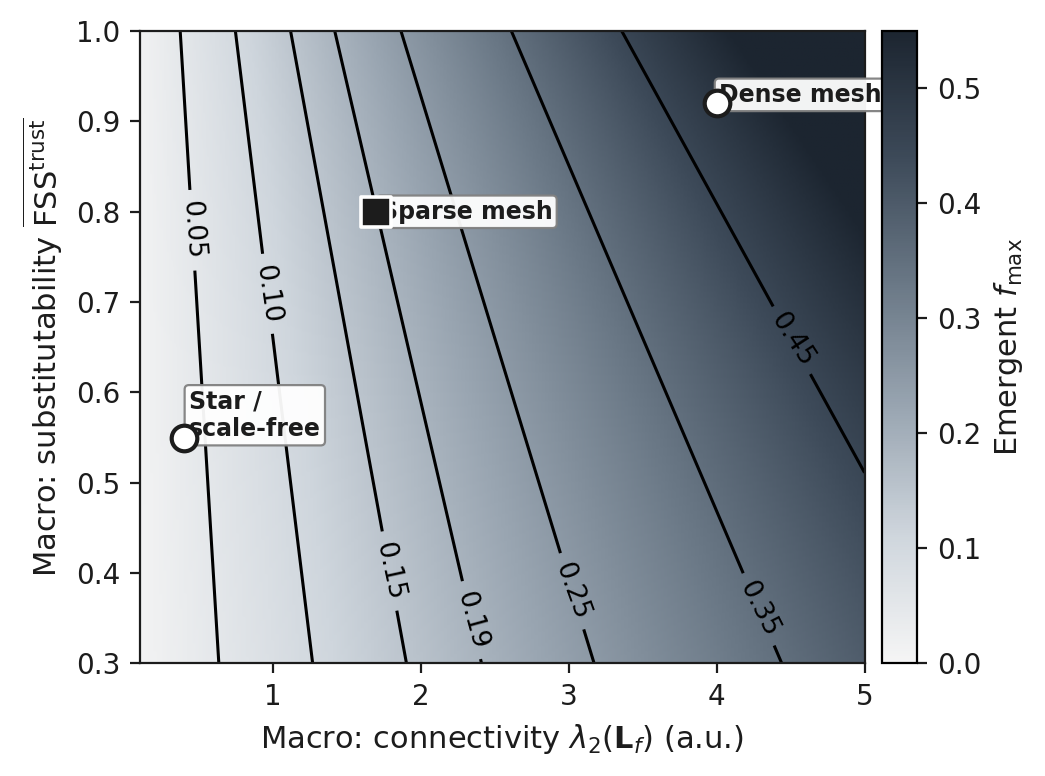}
\caption{Macro fungibility landscape. Heatmap of the emergent breakdown threshold $f_{\max}$ over the macro choices $(\lambda_2(\mathbf{L}_f),\,\overline{\mathrm{FSS}^{\mathrm{trust}}})$ with micro knobs at their optimum. Black contours are iso-$f_{\max}$ levels; the straight-line geometry confirms Theorem~\ref{thm:duality}. Three representative deployments (dense mesh, sparse mesh, star/scale-free) span an order of magnitude in $f_{\max}$; the sparse-mesh point (black square) is the default operating point used in all other experiments}.
\label{fig:macro_landscape}
\end{figure}
\vspace{-4mm}
\subsection{Micro Decision Sensitivity}
Fig.~\ref{fig:micro_sensitivity} plots the emergent QoS over the micro-knob plane $(\beta,\eta_\tau)$ with the macro choices held fixed at the default operating point (the sparse-mesh topology of Sec.~\ref{sec:numerical}, i.e., $\lambda_2\!=\!1.7$, $\overline{\mathrm{FSS}^{\mathrm{trust}}}\!=\!0.80$, $f_{\mathcal{B}}\!=\!0.15$). The solid black contour at $0.9T^{\star}$, where $T^{\star}$ denotes the maximum achievable QoS over the $(\beta,\eta_\tau)$ plane, encloses the region in which the network achieves at least $90\%$ of its maximum QoS. This region therefore defines the operator's effective tuning space. The region is compact and approximately ellipsoidal, centred on the recommended operating point $(\beta^{\star},\eta_\tau^{\star})=(2f_{\mathcal{B}},0.20)$, obtained by maximizing the micro-factor $\kappa(\beta,\eta_\tau)$ in Theorem~\ref{thm:rule}. The figure exposes three failure modes corresponding to violations of the three terms in the proof of Theorem~\ref{thm:rule}: for $\beta<f_{\mathcal{B}}$, Byzantine agents (i.e., adversarial or faulty agents that transmit corrupted updates) pass the trim filter (Byzantine-leakage regime); for large $\beta$ paired with large $\eta_\tau$ the trimmed-mean estimator variance dominates; for small $\eta_\tau$ the trust dynamics saturate to the initial value.
\vspace{-4mm}
\begin{figure}[t]
\centering
\includegraphics[width=0.75\columnwidth]{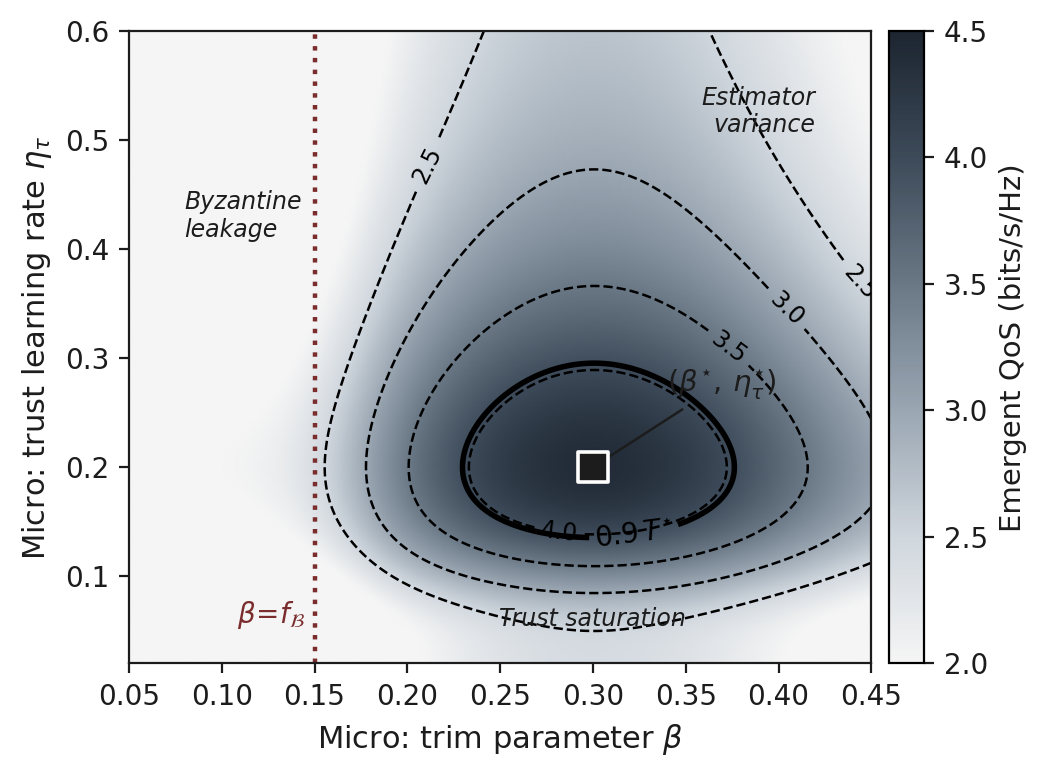}
\caption{Micro decision-sensitivity region. Emergent QoS over the micro knobs $(\beta,\eta_\tau)$ with the macro choices fixed at the default operating point of Sec.~\ref{sec:numerical} (sparse-mesh topology, $\lambda_2\!=\!1.7$, $\overline{\mathrm{FSS}^{\mathrm{trust}}}\!=\!0.80$). The solid black contour encloses the $0.9T^{\star}$ feasible region around the recommended operating point $(\beta^{\star},\eta_\tau^{\star})\!=\!(2f_{\mathcal{B}},0.20)\!=\!(0.30,0.20)$, with $f_{\mathcal{B}}\!=\!0.15$ from the default operating point. Three failure modes (Byzantine leakage, estimator variance, trust saturation) are labelled.}
\label{fig:micro_sensitivity}
\end{figure}

\subsection{Comparison with State-of-the-Art}
\label{subsec:sota}
To validate that the ABM-based instantiation studied in~\cite{ghosh2026structural} realises the macro potential identified by the framework, we compare it against three representative distributed schemes: distributed ADMM~\cite{Reifert2022}, MARL/GNN-based distributed power control~\cite{Wu2024MARL,Ivoghlian2022Adaptive}, and Su--Vaidya Byzantine consensus~\cite{su2016fault}. Fig.~\ref{fig:sota_comparison} reports three complementary comparisons over 1000 Monte Carlo trials at the default operating point defined above (sparse-mesh topology, $\lambda_2\!=\!1.7$, $\overline{\mathrm{FSS}^{\mathrm{trust}}}\!=\!0.80$, $f_{\mathcal{B}}\!=\!0.15$).

Panel~(a) shows that the proposed ABM framework reaches $\varepsilon$-feasibility in $72\pm5$ iterations versus $128\pm9$ for ADMM ($1.8\times$ speedup), $165\pm11$ for MARL/GNN ($2.3\times$), and $218\pm14$ for Su--Vaidya ($3.0\times$). At 0.1-ms 6G slot durations these map to $7.2$~ms versus $21.8$~ms allocation latency. Panel~(b) plots throughput against $f_{\mathcal{B}}$: the proposed ABM framework exhibits the framework-predicted ``plateau then cliff'' shape, with the empirical breakdown matching $f_{\max}\approx0.19$. In contrast, ADMM and MARL/GNN degrade approximately linearly from $f_{\mathcal{B}}=0$ due to the absence of explicit Byzantine-resilience mechanisms. Su--Vaidya maintains relatively stable throughput as $f_{\mathcal{B}}$ increases, albeit at the cost of lower throughput in the benign regime, reflecting the conservative nature of its trimming strategy. At larger Byzantine fractions, this conservatism enables Su--Vaidya to become more competitive than ADMM and MARL/GNN, although it remains below the proposed ABM framework prior to the breakdown point. Panel~(c) sweeps macro connectivity: the proposed ABM framework dominates throughout with a 1.0--1.3~bits/s/Hz throughput advantage, consistent with Corollary~6, which links higher $\lambda_2$ to improved resilience.
\vspace{-4mm}
\section{Conclusion}
\label{sec:conc}
We have proposed an agent-based-modeling decomposition of distributed resource allocation in 6G networks that separates operator-controllable macro choices and micro knobs from emergent macro outcomes, bridged by three structural metric classes. The macro--micro design rule (Thm.~\ref{thm:rule}) factors $f_{\max}$ into a macro term in $(\lambda_2(\mathbf{L}_f),\overline{\mathrm{FSS}^{\mathrm{trust}}})$ and a micro term in $(\beta,\eta_\tau)$; the fungibility--resilience monotonicity (Prop.~\ref{prop:mono}) couples the design rule to the underlying fungibility landscape; the connectivity--substitutability duality (Thm.~\ref{thm:duality}) gives a closed-form macro-resource exchange rate; the multi-application generalization (Prop.~\ref{prop:apps}) lifts the framework to ISAC, federated learning, and V2X. A natural next step is extending the framework to non-stationary fungibility landscapes that evolve with deployment conditions, such as changing traffic demands, network topology variations, resource availability fluctuations, and time-varying threat environments.
\bibliographystyle{IEEEtran}
\bibliography{NL}
\end{document}